\theoremstyle{definition}
\newtheorem{dfn}{Definition}
\newtheorem{thm}[dfn]{Theorem}
\def\BibTeX{{\rm B\kern-.05em{\sc i\kern-.025em b}\kern-.08em
    T\kern-.1667em\lower.7ex\hbox{E}\kern-.125emX}}
\newcommand{\mcrot}[4]{\multicolumn{#1}{#2}{\rlap{\rotatebox{#3}{#4}~}}} 
\begin{document}

\title{\textsc{FiberPool}: Leveraging Multiple Blockchains \\for Decentralized Pooled Mining
}

\author{\IEEEauthorblockN{Akira Sakurai}
\IEEEauthorblockA{\textit{Kyoto University} \\
Kyoto, Japan}
\and
\IEEEauthorblockN{Kazuyuki Shudo}
\IEEEauthorblockA{\textit{Kyoto University} \\
Kyoto, Japan}
}

\maketitle

\begin{abstract}
The security of blockchain systems based on Proof of Work relies on mining. However, mining suffers from unstable revenue, prompting many miners to form cooperative mining pools. Most existing mining pools operate in a centralized manner, which undermines the decentralization principle of blockchain.

Distributed mining pools offer a practical solution to this problem. Well-known examples include \textsc{P2Pool} and \textsc{SmartPool}. However, \textsc{P2Pool} encounters scalability and security issues in its early stages. Similarly, \textsc{SmartPool} is not budget-balanced and imposes fees due to its heavy use of the smart contract.

In this research, we present a distributed mining pool named \textsc{FiberPool} to address these challenges. \textsc{FiberPool} integrates a smart contract on the main chain, a storage chain for sharing data necessary for share verification, and a child chain to reduce fees associated with using and withdrawing block rewards. We validate the mining fairness, budget balance, reward stability, and incentive compatibility of the payment scheme—\textsc{FiberPool} Proportional—adopted by \textsc{FiberPool}.
\end{abstract}

\begin{IEEEkeywords}
Blockchain, Decentralized Mining Pool
\end{IEEEkeywords}

\section{Introduction}
Bitcoin~\cite{bitcoin} and other Proof of Work (PoW)–based blockchains derive their security from well-designed incentive mechanisms. Nodes participating in the blockchain network, known as miners, perform mining to receive block rewards, thereby stabilizing the system and processing transactions.

One issue associated with mining is the instability of mining revenue. The time interval between block generations for each miner follows an exponential distribution with parameter $\lambda = \alpha / T$, where $\alpha$ denotes the ratio of the miner's hashrate and $T$ represents the average block generation interval. Consequently, the expected time until receiving a block reward is $T/\alpha$, and the variance is $T^2/\alpha^2$. This implies that miners with lower hashrates experience greater instability in their mining revenue.

To mitigate the instability of mining revenue, most miners cooperate by pooling their computational resources to obtain block rewards more steadily. Such a system is known as a mining pool. Miners in a mining pool demonstrate their contributions by submitting partial proofs of work, called shares, and receive rewards determined by the pool’s payment scheme in accordance with their contributions.

However, most existing mining pools operate in a centralized manner, with the majority of the system's hashrate concentrated within these pools~\cite{IsBitcoinaDecentralizedCurrency}. This concentration undermines the overall decentralization of the system and leads to issues such as participation fees, centralization of transaction-processing nodes, and an increased risk of attacks, including Selfish Mining~\cite{majorityisnotenough} and double-spending attacks.

As a practical approach to the centralization problem of mining pools, distributed mining pools offer a promising solution. \textsc{P2Pool}~\cite{p2pool} was the first distributed mining pool proposed. It is managed by a side chain of shares, called a share chain. This blockchain of shares—easier to generate than blocks—has a shorter generation interval than the main chain and can distribute rewards to more miners.

However, \textsc{P2Pool} faces two main issues. The first is scalability: as a blockchain, the share chain has a limited capacity for generating shares. The second issue concerns the security of the share chain, which depends on the hashrate of the miners participating in \textsc{P2Pool}. In particular, the low security during the pool's early stage is problematic.

As a solution to the problems of \textsc{P2Pool}, \textsc{SmartPool}~\cite{SmartPool} was proposed. \textsc{SmartPool} is a distributed mining pool managed by a smart contract on the main chain, meaning its security relies on the security of that chain. Additionally, \textsc{SmartPool} uses probabilistic verification via a Merkle tree~\cite{merkletree} for share verification by the smart contract. This probabilistic approach allows a large number of shares to be processed relatively inexpensively.

However, \textsc{SmartPool} faces two main issues. The first is soaring fees caused by performing share verifications entirely on the main chain. The second issue is its adoption of the Pay-Per-Share (PPS) payment scheme. PPS distributes rewards for shares submitted at a predetermined rate, but it is not budget-balanced. This imbalance undermines system decentralization and effectively incurs fees charged to the administrator.

In this research, we propose a distributed mining pool named \textsc{FiberPool} that addresses the shortcomings of existing distributed mining pools (see Table~\ref{table:compare}). \textsc{FiberPool} is governed by three blockchains: a smart contract on the main chain, a storage chain, and a child chain. In \textsc{FiberPool}, data necessary for share verification is shared via the storage chain, and share verification is performed locally by each miner. This approach significantly reduces fees associated with share verification. Moreover, fees for using or withdrawing rewards are lowered by employing the child chain, a layer-2 technology~\cite{BlockchainScalingUsingRollupsAComprehensiveSurvey}\cite{plasma}.

Additionally, \textsc{FiberPool} introduces a novel payment scheme called \textsc{FiberPool} Proportional (FProportional). We evaluate this payment scheme in terms of mining fairness, budget balance, reward stability, and incentive compatibility.

\begin{table}[tb]
  \centering
  \caption{Comparison between existing mining pools and \textsc{FiberPool}.}
  \label{table:compare}
  \begin{tabular}{cccccc}
     & \mcrot{1}{c}{60}{Decentralization} & \mcrot{1}{c}{60}{Scalability} & \mcrot{1}{c}{60}{Early-Stage Security} & \mcrot{1}{c}{60}{Blockchain Fee} & \mcrot{1}{c}{60}{Budget Balance} \\ \hline \hline
     \multicolumn{1}{c||}{Centralized Pool} & $\times$ & $-$ & $-$ & $-$ & $-$ \\ 
     \multicolumn{1}{c||}{\textsc{P2Pool}} & $\circ$ & $\times$ & $\times$ & $\circ$ & $\circ$ \\
     \multicolumn{1}{c||}{\textsc{SmartPool}} & $\circ$ & $\circ$ & $\circ$ & $\times$ & $\times$ \\
     \multicolumn{1}{c||}{\textsc{FiberPool} (proposed)} & $\circ$ & $\circ$ & $\circ$ & $\circ$ & $\circ$ \\ \hline
  \end{tabular}
\end{table}

\section{Existing Mining Pools and Challenges}
\subsection{Mining Pools}
A mining pool is a system in which miners cooperate to generate blocks and distribute the block rewards to each miner according to their contributions. By using a mining pool, miners can receive rewards steadily. The contribution of each miner is measured by shares. A share refers to a PoW that is of lower difficulty than the block.

There are four ideal properties that a payment scheme should satisfy:
\begin{itemize}
    \item \textbf{Mining Fairness}: Rewards are given according to each miner’s contribution to the pool. The ratio of hashrate of each miner to the block reward rate should be equal.
    \item \textbf{Budget Balance}: The rewards earned by the pool are distributed to the miners without surplus or deficit.
    \item \textbf{Reward Stability}: Rewards are received in a stable manner.
    \item \textbf{Incentive Compatibility}: Honest mining is more economically advantageous than strategic mining.
\end{itemize}

As for incentive compatibility, various strategic mining methods can be considered, but this study considers three strategic mining tactics: pool hopping, cross-period mining strategies, and delaying the submission of shares and blocks. This research does not consider strategies including intentional forks of the main chain aimed at invalidating blocks of honest miners. Regarding incentive compatibility that takes intentional forks of the main chain into account, it is known that no existing payment scheme or mining pool can satisfy such compatibility in a way that maintains compatibility with the system \cite{FAW}. Suppressing such strategic mining is not the purpose of this study.

There are several payment schemes. Here, we focus on Proportional, Pay-Per-Share (PPS), and Pay-Per-Last-N-Shares (PPLNS) as examples. 

Proportional distributes rewards according to the ratio of shares submitted by each miner until the pool as a whole finds a block. Under this method, incentive compatibility does not hold \cite{rosenfeld2011analysisbitcoinpooledmining}. For instance, if the proportion of shares at the time of block generation is lower than the proportion of hashrate, it is more profitable to delay the publication of the block.

PPS is a payment scheme that exchanges shares for rewards at a predetermined rate. The problem with PPS is that it is not budget-balanced. If no shares are generated before a block is found, most of the block reward will not be distributed to the miners. Moreover, if shares are generated continuously without a block ever being found, the pool's funds may be depleted.

PPLNS distributes the block reward among the most recent N shares (including the block itself) submitted to the pool at the time of block generation.

\subsection{\textsc{P2Pool}}
\textsc{P2Pool} is a distributed mining pool managed by a share chain, a type of side chain \cite{p2pool}. Similar to how blocks work in a blockchain, shares in \textsc{P2Pool} embed information referencing another share, thereby forming a list structure in the same way as a blockchain. When a miner generates a share, they publish it to be shared across the entire network. Upon receiving a share, each miner verifies the PoW of that share and checks whether the coinbase transaction distributes the block reward among the past last N shares (where N is a value determined by the protocol) that include this share. After verification, miners update their own share chain. 

When a miner generates a block, the block reward is distributed equally among the past N shares, including the block itself (see Figure~\ref{P2Pooloverview}). This implies that the payment scheme used by \textsc{P2Pool} is PPLNS.

\begin{figure}[tb]
\begin{center}
\includegraphics[width=\linewidth]{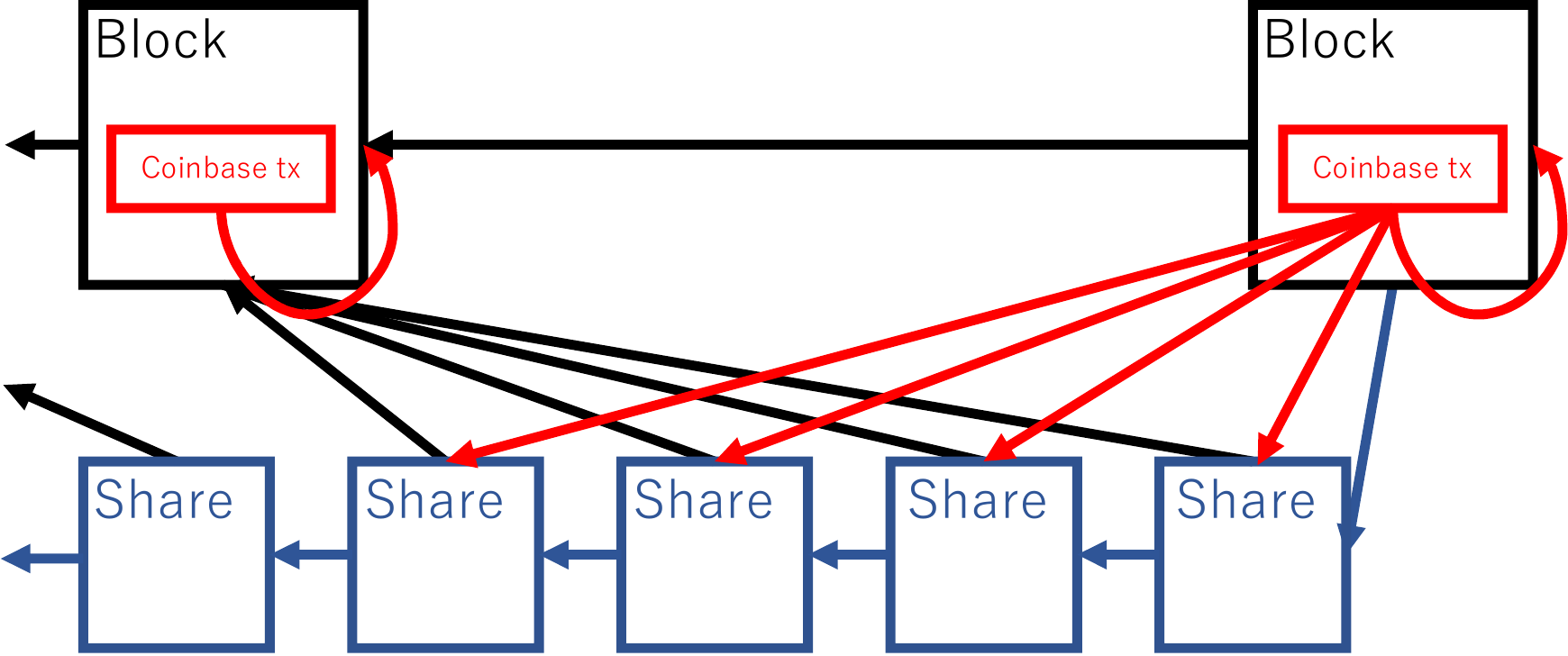}
\end{center}
\caption{Overview of \textsc{P2Pool}. Black arrows represent the reference structure of blocks, blue arrows represent the reference structure of shares, and red arrows indicate the distribution of rewards by the coinbase transaction. \textsc{P2Pool} is managed by a share chain with a shorter generation interval than the main chain. When a share on the share chain also functions as a block, the block reward is distributed over the past $N$ shares.}
\label{P2Pooloverview}
\end{figure}

There is nothing that forces miners participating in \textsc{P2Pool} to use a specific block template. For example, it is possible to generate a block that does not distribute rewards to the past N shares. However, such a block will not be considered a valid share by other miners. In other words, even if the PoW is valid as a share, it will not be subject to reward distribution. This is equivalent to mining without participating in \textsc{P2Pool}.

\textsc{P2Pool} faces the following two problems. First, the share chain is ultimately a blockchain, and thus it is subject to the well-known scalability problem of blockchains \cite{OnScalingDecentralizedBlockchains}. Indeed, recent research has significantly increased the number of transactions blockchains can handle compared to those early days \cite{Prism}\cite{phantomghostdag} but achieving blockchain processing capabilities that are both inexpensive and exceed network capacity remains challenging. In \textsc{P2Pool}, as the pool grows in size, the share generation difficulty increases, undermining reward stability. To mitigate this effect and stabilize rewards, it is necessary to increase $N$ and lower the difficulty of share generation. This corresponds to increasing the size of the coinbase and shortening the share block generation interval-both of which are reduced to the blockchain scalability problem.

The second problem is that since the share chain is a blockchain, its security depends on the total hashrate of \textsc{P2Pool} \cite{SmartPool}. This issue becomes especially problematic during the early stage when there are few miners participating in \textsc{P2Pool}.

In our proposed \textsc{FiberPool}, we tackle the scalability problem by integrating probabilistic share verification, similar to the method used in \textsc{SmartPool}. This strategy allows each miner to independently set the share generation difficulty, ensuring stable rewards regardless of the pool's size. Additionally, to address the low early-stage security issue inherent in \textsc{P2Pool}, \textsc{FiberPool} is built exclusively on blockchains that have already established sufficient security.

\subsection{\textsc{SmartPool}} \label{sectionsmartpool}
\textsc{SmartPool} was proposed as a decentralized mining pool to address the issues of \textsc{P2Pool} \cite{SmartPool}. \textsc{SmartPool} is managed by the smart contract. Since the smart contract is on the main chain, the security of the system relies on the main chain. In this sense, \textsc{SmartPool} is more secure than \textsc{P2Pool}. Additionally, \textsc{SmartPool} uses a Merkle tree for probabilistic verification of shares. By reducing the verification of a large number of shares to the verification of a few shares, \textsc{SmartPool} dramatically decreases computational costs and data size. Moreover, \textsc{SmartPool} allows the share generation difficulty to be set freely, enabling stable rewards.

The probabilistic verification of shares in \textsc{SmartPool} is also used in \textsc{FiberPool}, so we will describe it in detail here. First, each miner sets the share generation difficulty $D$ according to their computational power at will. Then, the validity of a share having a valid PoW can be verified by the following inequality:
\begin{align}
PoW(\text{share}) \leq D
\end{align}
Here, the function $PoW()$ refers to the Proof-of-Work algorithm. The miner embeds the difficulty $D$, the number of shares generated so far ($counter$), and their public key into the block template. This ensures a commitment to $D$, $counter$, and the public key. In addition, mining is performed by specifying the address of \textsc{SmartPool}'s smart contract as the coinbase.

By committing to the $counter$ and the public key during mining, the duplication and reuse of shares within a batch, within the same miner, or between different miners are prevented.

Miners send their shares to the smart contract when they wish to receive rewards, and then they receive the corresponding compensation. First, a Merkle tree is constructed from the list of shares arranged in order of $counter$. The tuple consisting of the Merkle root, the generation difficulty $D$, the number of shares, and the signature of the block generator is then submitted as a batch to the \textsc{SmartPool} smart contract. Next, the smart contract randomly selects one share from the list corresponding to the Merkle root (for instance, using the block hash value \cite{SmartPool} or RANDAO \cite{randao} as the random value), and requests the miner to submit it. The miner then submits the specified share, its Merkle proof, and the signature to the smart contract as proof. The smart contract verifies the submitted proof. Specifically, the verification is performed through the following steps:
\begin{enumerate}
   \item Verify that the position in the list determined by the Merkle proof and the share's $counter$ matches the position specified by the smart contract.
   \item Verify that the coinbase address specifies the \textsc{SmartPool} smart contract.
   \item Verify that the generation difficulty $D$ of the batch matches the share's generation difficulty $D$.
   \item Verify that the share has a PoW satisfying the generation difficulty $D$.
   \item Verify that the signature used to submit the batch and the signature sent to the smart contract for share verification were made using the private key corresponding to the public key included in the share.
\end{enumerate}
If all verifications succeed, the smart contract deems all shares in the batch valid and grants a reward corresponding to the total PoW held by the batch ($N/D$). If any verification fails along the way, the smart contract considers all shares in the batch as invalid and does not issue any reward. This mechanism implies that the payment scheme adopted by \textsc{SmartPool} is PPS (Pay-Per-Share).

The effectiveness of probabilistic verification of shares in \textsc{SmartPool} is based on the following economic insight. Let $R$ be the expected reward, $B$ be the reward received when share verification is successful, and $f$ be the fraction of invalid shares in the batch. Then the following equation holds:
\begin{align}
R &= B (1 - f) + 0 \cdot f \\
  &= B (1 - f)
\end{align}
This demonstrates that the expected reward remains unchanged whether or not invalid shares are included.

\textsc{SmartPool} faces the following two issues. First, there are fees due to the use of smart contracts on the main chain. This is essentially a scalability problem in \textsc{SmartPool}. Fees incurred from the direct use of smart contracts on the main chain are a universal issue in blockchain beyond just decentralized mining pools. To address this problem, there is active research and development in areas such as Layer 2 solutions and sharding \cite{lightningnetwork}\cite{plasma}\cite{Arbitrum}\cite{ShardingProtocol}. In addition, year by year, the complexity of the PoW verification required is increasing as a countermeasure against ASICs \cite{cuckooCycle}\cite{ProgPoW}\cite{etchash}\cite{randomx}\cite{scriptbitcoincash}, which brings about further surges in fees.

The second issue is that \textsc{SmartPool} adopts PPS as its payment scheme for submitted shares. A problem with PPS is that it is not budget-balanced, meaning that the block rewards earned by the pool and the rewards distributed to miners do not match perfectly. In fact, if enough blocks are not generated, PPS can lead to system bankruptcy. Such risks require appropriate countermeasures, such as having the smart contract manager reserve a certain amount of funds in the smart contract in advance. The presence of such a manager undermines the decentralization of the system. Furthermore, if rewards are not sufficiently distributed, the undistributed rewards virtually become fees for the miners.

In our proposed \textsc{FiberPool}, share verification does not involve using smart contracts on the main chain as was done in \textsc{SmartPool}. Instead, the data necessary for share verification is shared via a blockchain dedicated to data storage, and each miner performs share verification locally. This approach reduces the costs associated with share verification. Additionally, while \textsc{SmartPool} faces budget imbalance issues by pooling block rewards in smart contracts and distributing them based on submitted shares, \textsc{FiberPool} addresses this problem by directly allocating block rewards to miners according to their mining contributions.

\section{\textsc{FiberPool}}
We propose a decentralized mining pool, \textsc{FiberPool}, that addresses the challenges faced by existing solutions. \textsc{FiberPool} is managed by a smart contract on the main chain, a storage chain, and a child chain. Unlike \textsc{P2Pool}, which becomes insecure when fewer miners participate, \textsc{FiberPool} remains secure because the security of its three blockchain components does not depend on the computational resources within \textsc{FiberPool}.

In \textsc{FiberPool}, the data required for share verification is shared via the storage chain, which is well-suited for data storage. Furthermore, each miner performs share verification locally, thereby reducing fees associated with share verification. Additionally, the use of the child chain lowers fees related to reward withdrawals and expenditures.

Miners in \textsc{FiberPool} receive stable rewards regardless of pool size by employing probabilistic share verification, which allows each miner to freely set the share generation difficulty. Finally, \textsc{FiberPool} preserves a balanced budget by creating block templates that reflect historical mining contributions and distributing rewards directly to miners.

Considering an ideal mining pool, it would be preferable to distribute block rewards precisely based on each miner's contribution at the moment the pool successfully generates a block. However, this approach would require continuous knowledge of each miner's hashrate distribution, leading to significant communication costs and impracticality. To address this, \textsc{FiberPool} introduces the concept of a \emph{period}—a fixed time interval—and distributes rewards based on contributions within each period.

To accurately account for each miner's contribution per period, sufficient time must be allocated to verify the distribution of miners' hashrates. This means that mining activities and their verification do not align perfectly in time. Considering this misalignment raises an incentive compatibility issue related to the timing of reward distribution and mining strategies. For instance, if the verification of a period's hashrate distribution occurs after rewards for that period are finalized, miners might be economically motivated to concentrate their efforts in periods with higher rewards. To prevent this, \textsc{FiberPool} adjusts the process so that rewards for a period are determined only after the verification of that period's hashrate distribution is complete.

\subsection{Period}
We introduce a fixed time interval called a \emph{period}. The length of a period must be long enough to allow verification of all miners' shares—for example, the number of main chain blocks produced in one week. In \textsc{FiberPool}, each miner updates their block template at the start of every period and conducts mining accordingly. After completing mining for a given period, miners share the data necessary for share verification via the storage chain. The end of each period acts as a preparation phase, termed the \emph{Prepare} phase, during which each miner constructs block templates for the next period.

\begin{figure}[tb]
\begin{center}
\includegraphics[width=\linewidth]{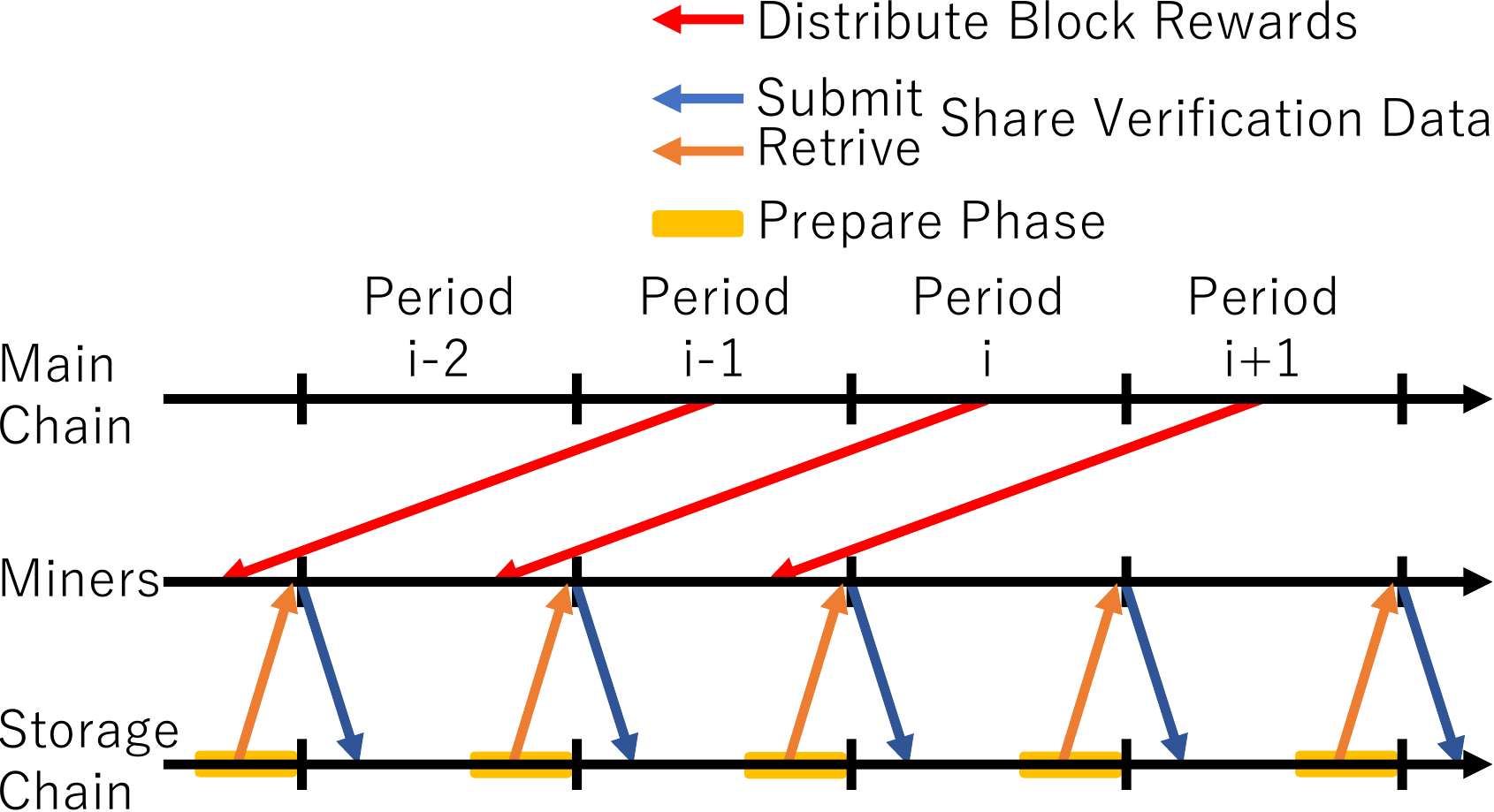}
\end{center}
\caption{Overview of mining in \textsc{FiberPool}. The red arrows indicate the distribution of block rewards: rewards from blocks generated in period $i$ are distributed to miners who mined in period $i-2$. The blue and orange arrows represent the submission and retrieving of data for share verification. After completing mining in period $i$, miners submit share verification data to the storage chain (blue arrows). Just before the start of period $i+2$, the system enters a \emph{Prepare} phase (yellow box), during which each miner retrieves the share verification data from the storage chain and calculates the total PoW and its distribution for period $i$ (orange arrows). By using the storage chain, the total PoW and its distribution for period $i$ are shared among miners before period $i+2$ begins, and in period $i+2$, miners commit to these values as they proceed with mining.}
\label{proposedmethod}
\end{figure}

Figure~\ref{proposedmethod} illustrates the interactions between the main chain, the storage chain, and the miners participating in \textsc{FiberPool}. In period $i$, the shares each miner generates based on their block template are sent in a batch to the storage chain before the \emph{Prepare} phase of period $i+2$ begins, where they are verified. The share verification data on the storage chain are then shared among all miners, so the total PoW and its distribution for period $i$ become known to all participants. At the beginning of period $i+2$, miners commit to the total PoW and its distribution and use this information to construct new block templates for mining.

\subsection{Block Template} \label{blocktemplatesection}
We describe the block template for period $i$. At the start of period $i$, the total PoW and its distribution from period $i-2$ have been shared among all miners. During period $i$, mining is conducted by committing to the total PoW and its distribution from period $i-2$.

\begin{figure}[tb]
\begin{center}
\includegraphics[width=\linewidth]{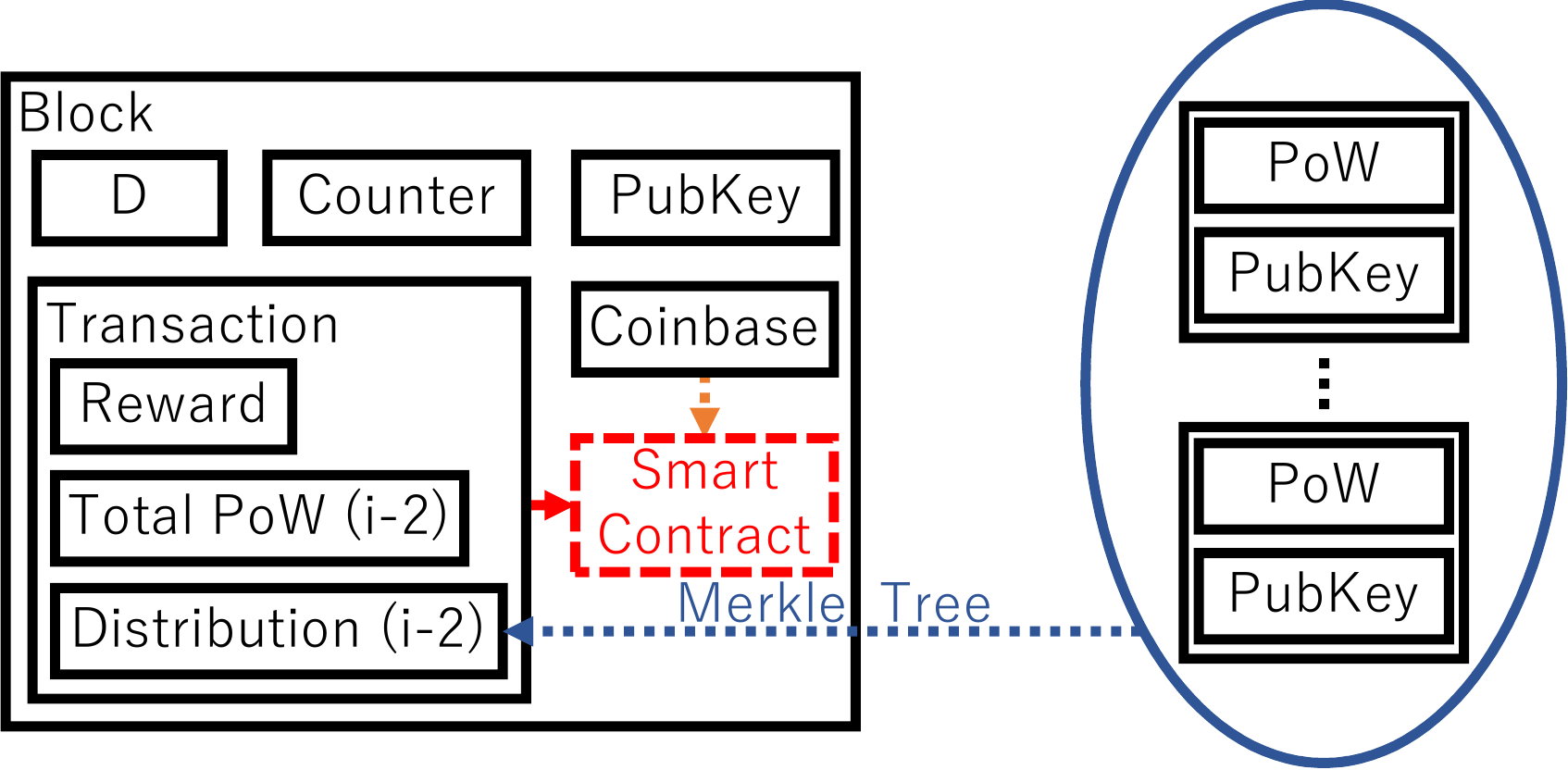}
\end{center}
\caption{Block template for period $i$ in \textsc{FiberPool}. Before period $i$ begins, the total PoW and its distribution from period $i-2$ are shared through the storage chain. During period $i$, mining involves committing to the total PoW and its distribution from period $i-2$.}
\label{blocktemplate}
\end{figure}

Figure~\ref{blocktemplate} presents a concrete block template for period $i$. First, the template's coinbase directs the reward to the smart contract address in \textsc{FiberPool}, ensuring that the entire block reward is sent to the smart contract. The template also includes a transaction linking the block reward to the total PoW and its distribution from period $i-2$. This transaction commits to the total PoW and its distribution from period $i-2$.

Additionally, the template contains information needed for the probabilistic verification of shares. Specifically, it embeds the share generation difficulty $D$, the number of shares generated in period $i$ ($counter$), and the block producer’s public key $pubkey$. These elements can be embedded, for example, by including them in the transaction to the smart contract.

In \textsc{FiberPool}, block rewards obtained in period $i$ are linked to the total PoW and its distribution from period $i-2$ by transactions. Each such transaction takes as arguments the block reward along with the total PoW and its distribution from period $i-2$. 

However, during transaction processing, balance changes resulting from the coinbase have not yet occurred, so the transaction cannot immediately transfer the block reward. Consequently, it is not feasible within that transaction to verify whether the block reward was transferred appropriately without surplus or deficit.

To overcome this limitation, \textsc{FiberPool} delegates verification to the next user of the smart contract. Specifically, the subsequent user checks whether the sum of the smart contract’s current balance and the total amount withdrawn thus far (i.e., the cumulative block rewards transferred) is at least equal to the total block rewards linked by all previous validated transactions and the unverified transaction. If this condition is met, the transaction is considered legitimate and the linking is completed. If not, the transaction is deemed illegitimate and the linking is invalidated.

This verification process is performed before the user utilizes the smart contract. By employing this approach, the smart contract prevents the improper use of block rewards that have not yet been accurately linked to the correct total PoW and its distributions.

As with \textsc{P2Pool}, \textsc{FiberPool} does not enforce miners to commit to specific values when constructing a block template. For instance, a miner could commit to the entire total PoW and its distribution so that the entire block reward would be sent to himself. However, if such a block template is constructed, then—even if the miner generates shares—those shares will be considered invalid during share verification. Consequently, they will not be eligible for block reward distribution in the following period. This means that the miner is not participating in \textsc{FiberPool} from the start.

Therefore, anyone participating in \textsc{FiberPool} must construct a block template in period $i$ by committing to the total PoW and its distribution from period $i-2$, as described above. The same requirement applies to other values such as $D$ and $counter$.

\subsection{Verification of Shares}
Each miner verifies the shares generated in period $i$ during the $Prepare$ phase of period $i+2$ through the storage chain. The purpose of the storage chain is to ensure that at the beginning of period $i+2$, all miners share the total PoW and its distribution from period $i-1$ (see Section~\ref{sectionstoragechain} for storage chain candidates).

The share verification process in \textsc{FiberPool} is fundamentally similar to the probabilistic share verification used in \textsc{SmartPool}. A miner who has completed mining in period $i$ submits a batch of generated shares to the storage chain. Here, the batch consists of the Merkle tree root of the shares generated during period $i$, the share generation difficulty $D$, and the miner's signature. Subsequently, a proof based on a random value (e.g. the hash value of the next block in the storage chain) is submitted to the storage chain. The proof comprises the share designated by the random value, the Merkle proof of the share, and the miner's signature. Unlike the probabilistic share verification in \textsc{SmartPool}, \textsc{FiberPool} requires all miners to submit share verification data to the storage chain for each period.

Before period $i+2$ begins, all miners must be prepared for mining in that period. Specifically, they need to share the total PoW and its distribution from period $i$ among all participants. To ensure this, a dedicated $Prepare$ phase is established before period $i+2$. During this phase, no share verification data for period $i$ can be submitted. This restriction ensures that only information with sufficient finality is retrieved from the storage chain, enabling the accurate calculation of the total PoW and its distribution for period $i$.

As an example, consider \textsc{FiberPool}, where the period is determined by the block height on the main chain. The $Prepare$ phase for period $i$ can be established as follows: first, use the main chain's block height to designate a specific block that marks the start of the main chain's $Prepare$ phase for each period. Next, locate the first block on the storage chain whose timestamp exceeds that of the designated main chain block. This storage chain block then marks the beginning of the $Prepare$ phase for period $i$ on the storage chain.

Specifically, each miner performs locally the following verification on the share verification data generated during period $i$, which was retrieved from the storage chain during the $Prepare$ phase:
\begin{enumerate}
   \item Verify that the verification data was submitted to the storage chain before the $Prepare$ phase of period $i+2$.
   \item Verify that the position in the list determined by the Merkle proof matches both the share's $counter$ and the list position specified by the random value.
   \item Verify that the block rewards have been sent to the smart contract.
   \item Verify that the transfer to the smart contract is linked to the total PoW and its distribution from period $i-2$.
   \item Verify that the batch's generation difficulty $D$ matches the share's generation difficulty $D$.
   \item Verify that the share has a PoW satisfying the generation difficulty $D$.
   \item Verify that the signature used to submit the batch and the signature sent to the storage chain for share verification were generated using the private key corresponding to the public key contained in the share.
\end{enumerate}
If all verifications succeed, all shares in the batch are considered valid. Each miner's PoW is given by $N/D$ where $N$ is the number of shares in the batch submitted by the miner. Otherwise, all shares are regarded as invalid. 

The major difference from the probabilistic verification of shares in \textsc{SmartPool} is that in \textsc{FiberPool}, a transaction to the associated smart contract is required, and it must be verified that the transfer to the smart contract is linked to the total PoW and its distribution from period $i-2$.

\begin{figure}[tb]
\begin{center}
\includegraphics[width=\linewidth]{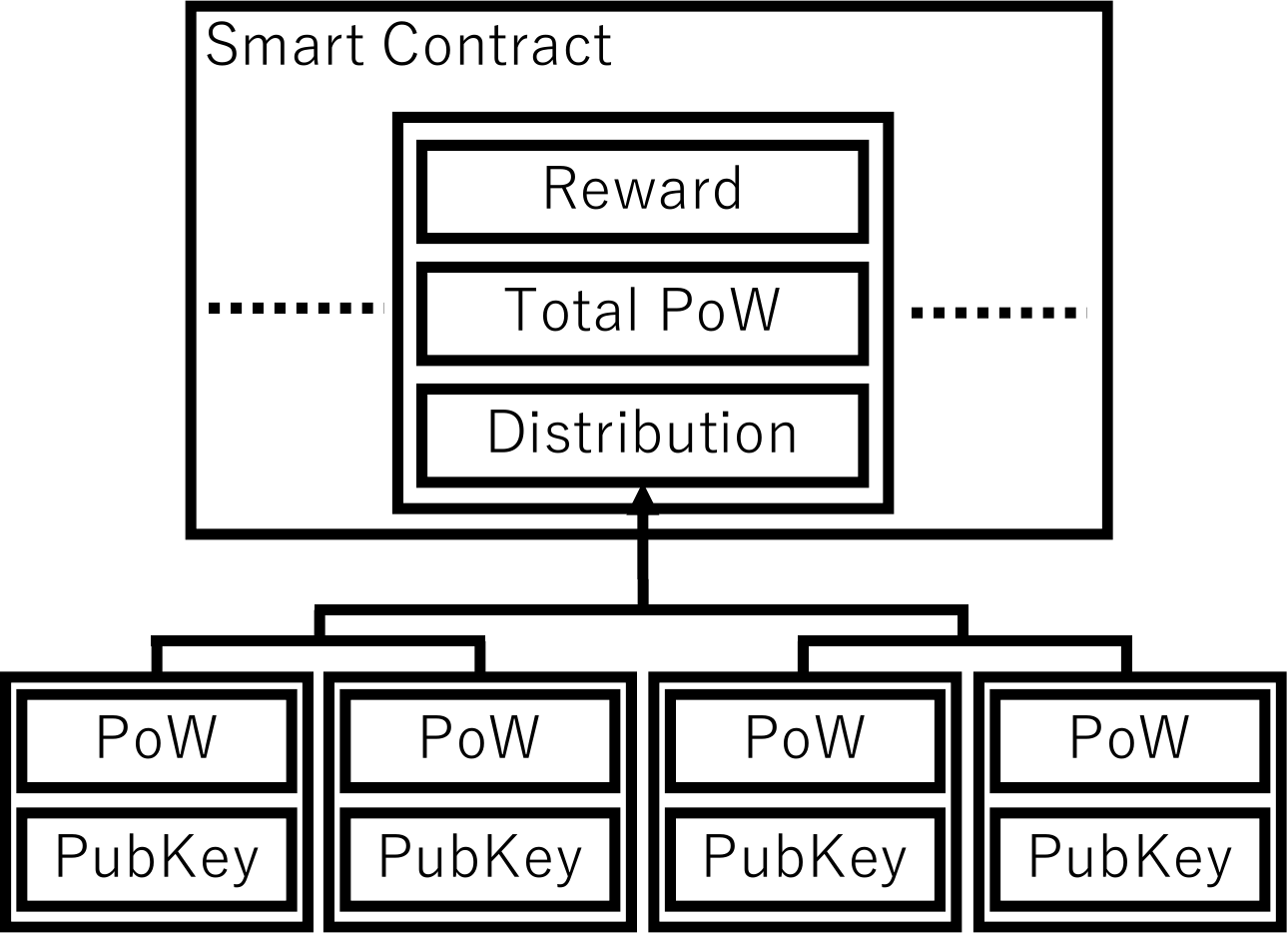}
\end{center}
\caption{Smart contract in \textsc{FiberPool}. Block rewards earned in period $i+2$ are linked to the total PoW and its distribution from period $i$.}
\label{smartcontract}
\end{figure}

\subsection{Using and Withdrawing Block Rewards}
In \textsc{FiberPool}, fees associated with using and withdrawing block rewards are reduced by employing Layer 2 technologies based on a child chain, such as Rollup or Plasma \cite{plasma}\cite{Arbitrum}\cite{BlockchainScalingUsingRollupsAComprehensiveSurvey}.

By treating the transfer of block rewards in \textsc{FiberPool} as a currency deposit to the child chain, the child chain can be integrated into \textsc{FiberPool} with minimal changes. The primary modification involves adding a transfer of block rewards linked to the total PoW and its distribution as a deposit mechanism (see Section~\ref{blocktemplatesection} and Figure~\ref{smartcontract}). Additionally, when using such a deposit, one pair of PoW and public key from the corresponding total PoW and its distribution must be specified, along with the associated Merkle proof and signature. 

In the case of Plasma, which utilizes a challenge/assertion model, the priority for exiting block rewards that have not been transacted within the child chain must be set to an additional three periods in the past. This adjustment is necessary because it takes up to three periods to fully cease participation in \textsc{FiberPool}, thereby requiring an increase in the exit priority of block rewards.

By utilizing a child chain, block rewards can be used inexpensively on the child chain without withdrawal. Additionally, block rewards on the child chain can be aggregated into a single transaction, and the single transaction can be used to withdraw the accumulated block rewards all at once.

\section{Evaluation}
We evaluated \textsc{FiberPool} by examining key desirable properties of a mining pool within the system. These properties include mining fairness, budget balance, reward stability, and incentive compatibility.

Assume three miners: miner $m_1$, who freely chooses whether to participate in \textsc{FiberPool} or not; miner $m_2$, who participates in \textsc{FiberPool}; and miner $m_3$, who does not participate in \textsc{FiberPool}. Let the fraction of the hashrate held by each miner be $\alpha$ for $m_1$, $\beta$ for $m_2$, and $\gamma$ for $m_3$. Then, $\alpha + \beta + \gamma = 1$. We assume that the shares submitted and verified by each miner in each period accurately reflect the miner's hashrate. This assumption is based on the fact that each miner can freely set the share generation difficulty $D$, resulting in a negligible coefficient of variation in the number of generated shares.

\subsection{Mining Fairness and Budget Balance}
Mining fairness means that rewards are distributed proportional to mining contributions. Formally, the ratio of a miner's hashrate to the block reward rate should be equal. Under \textsc{FiberPool}, mining fairness holds in the following sense.

\begin{thm} \label{thmfairness} Consider period $i$. Suppose miner $m_1$ has continuously participated in \textsc{FiberPool} since before period $i-2$. Then, each time a miner participating in \textsc{FiberPool} generates a block in period $i$, $m_1$ receives a fraction of the block reward equal to $\alpha/(\alpha + \beta)$. \end{thm}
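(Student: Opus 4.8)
The plan is to follow a single block reward earned in period $i$ through the \textsc{FiberPool} mechanism, show that it is apportioned according to the PoW distribution of period $i-2$, and then evaluate that distribution under the stated modelling assumptions. First I would pin down who is being paid: by hypothesis $m_1$ mines in \textsc{FiberPool} throughout period $i-2$ (and afterwards), $m_2$ participates in \textsc{FiberPool} by definition, and $m_3$ never does. Hence the miners with valid shares in period $i-2$ are exactly $m_1$ and $m_2$, and any block whose reward reaches the smart contract in period $i$ is produced by one of $m_1, m_2$---blocks found by $m_3$ are mined outside the pool and never touch the contract.

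The core of the argument is the reward-linking rule. By the block-template construction of Section~\ref{blocktemplatesection}, every valid block produced in period $i$ directs its entire coinbase reward to the \textsc{FiberPool} smart contract and carries a transaction linking that reward to the total PoW and its distribution from period $i-2$; moreover the share-verification procedure rejects the shares of any miner whose template fails to commit to the correct values from period $i-2$, so---by the same reasoning used for \textsc{P2Pool}---such a miner is effectively not in the pool and is excluded from period-$i$ reward distribution. Crucially, by the time period $i$ begins the period $i-2$ verification data has been retrieved during the intervening \emph{Prepare} phase, and the period $i-2$ total PoW and its distribution are common knowledge and immutable (the \emph{Prepare}-phase restriction forbids any late submission of that data). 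Therefore every block validly produced by a \textsc{FiberPool} miner in period $i$ has its reward split among the period $i-2$ miners in proportion to their share of the period $i-2$ total PoW.

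It remains to compute that share. A miner $m$ who chose difficulty $D_m$ and had $N_m$ verified shares in period $i-2$ is credited PoW $N_m/D_m$; under the assumption that each miner's verified share count accurately reflects its hashrate (the coefficient of variation being negligible because $D$ may be chosen freely), $N_{m_1}/D_{m_1}$ and $N_{m_2}/D_{m_2}$ are proportional to $\alpha$ and $\beta$, so the pool's total PoW in period $i-2$ is proportional to $\alpha+\beta$ and $m_1$'s fraction of it is $\alpha/(\alpha+\beta)$. Combining this with the linking rule, each time a \textsc{FiberPool} miner generates a block in period $i$, $m_1$ receives the fraction $\alpha/(\alpha+\beta)$ of its reward, as claimed; since $\alpha/(\alpha+\beta)+\beta/(\alpha+\beta)=1$, the whole reward is disbursed, which simultaneously yields budget balance.

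The main obstacle is the middle step: one must argue rigorously that no period-$i$ block can be linked to anything other than the genuine distribution from period $i-2$, and that this distribution is already fixed when period $i$ begins. This rests on mechanisms that have to be invoked carefully---the \emph{Prepare}-phase ban on late submission of verification data, which makes the period $i-2$ figures final before period $i$ starts; the share-verification step that rejects any batch whose smart-contract transfer is not linked to the correct period $i-2$ distribution; and the delegated check by which the next user of the smart contract verifies that cumulative withdrawals never exceed the total linked by validated transactions. Together these ensure that a mis-committed template produces only invalid shares and hence earns nothing in period $i$; the arithmetic of the last step is routine by comparison, so the real work is essentially establishing soundness of the verification scheme.
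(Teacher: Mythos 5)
Your proposal is correct and follows essentially the same route as the paper's proof: both miners mine in period $i-2$, their verification data is finalized before the \emph{Prepare} phase of period $i$, period-$i$ blocks commit to that distribution, and $m_1$'s fraction of the period $i-2$ PoW is $\alpha/(\alpha+\beta)$ under the stated modelling assumption. You simply spell out in more detail the enforcement mechanisms (template rejection, \emph{Prepare}-phase finality, the delegated smart-contract check) that the paper's terser proof leaves implicit.
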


\begin{proof} Miner $m_1$ and miner $m_2$ both mine during period $i-2$ and submit their share verification data to the storage chain before the $Prepare$ phase of period $i$. At the beginning of period $i$, both $m_1$ and $m_2$ share the total PoW and its distribution from period $i-2$ and commit to this data while mining. At this time, the fraction of total hashrate contributed by $m_1$ in period $i-2$ is $\alpha/(\alpha + \beta)$. Therefore, $m_1$ receives $\alpha/(\alpha + \beta)$ of the block rewards generated by \textsc{FiberPool} in period $i$. \end{proof}

Next, we demonstrate budget balance in \textsc{FiberPool}. Budget balance means that rewards are distributed to miners without surplus or deficit. Using an argument similar to the proof of Theorem~\ref{thmfairness}, one can show that miner $m_2$ receives a fraction of $\beta/(\alpha + \beta)$ of the block reward. Therefore, the total rewards for $m_1$ and $m_2$ equal the block reward. This verifies that the payment scheme in \textsc{FiberPool} is budget-balanced unlike \textsc{SmartPool}.

\subsection{Reward Stability}
The stability of rewards in a budget-balanced mining pool can be divided into two aspects: the overall stability of the pool's rewards and the stability of each individual miner's reward within the pool. The overall reward stability of the mining pool depends solely on the fraction of the total hashrate it controls, a topic already analyzed in detail in existing research \cite{rosenfeld2011analysisbitcoinpooledmining}. Therefore, we focus on the second factor: the stability of each miner's reward within the pool. This stability is influenced primarily by the miner's hashrate fraction and the payment scheme used.

To demonstrate the effectiveness of reward stability in \textsc{FiberPool}, we compare it with the Pay-Per-Last-N-Shares (PPLNS) payment scheme. Let the fixed block reward be $B$.

\begin{thm}\label{variance} Consider period $i$. Suppose miner $m_1$ has continuously participated in \textsc{FiberPool} since before period $i-2$. When the \textsc{FiberPool} mining pool generates a block in period $i$, the variance of $m_1$'s reward is 0. In contrast, under PPLNS, the variance of $m_1$'s reward is given by $p(1-p)B^2/N$, where $p = \alpha/(\alpha + \beta)$, and $N$ is the number of shares eligible for block reward distribution. \end{thm}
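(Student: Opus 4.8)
The plan is to handle the two halves of the statement separately: the \textsc{FiberPool} bound is essentially immediate from Theorem~\ref{thmfairness}, while the PPLNS bound is a one-line binomial-variance computation.

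For \textsc{FiberPool}, note that by the start of period $i$ the total PoW and its distribution from period $i-2$ are already fixed and committed, and by the standing assumption that submitted and verified shares accurately reflect hashrate, $m_1$'s fraction of the pool's PoW in period $i-2$ is exactly $p=\alpha/(\alpha+\beta)$. Since $m_1$ has participated continuously since before period $i-2$, Theorem~\ref{thmfairness} applies verbatim: every block the pool finds in period $i$ pays $m_1$ the fraction $p$ of $B$. Hence, conditioned on the pool generating a block in period $i$, $m_1$'s reward is the constant $pB$, and a constant has variance $0$.

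For PPLNS, condition likewise on the pool finding a block in period $i$, so that $B$ is divided equally among the last $N$ shares (the block itself counting as one). By the memorylessness of proof of work together with the assumption that share production tracks hashrate, the producer of each of those $N$ shares is $m_1$ independently with probability $p$; thus the count $X$ of them belonging to $m_1$ is $\mathrm{Binomial}(N,p)$ and $m_1$'s reward is $(B/N)X$. Therefore
\[
\mathrm{Var}\!\left(\frac{B}{N}X\right)=\frac{B^2}{N^2}\,\mathrm{Var}(X)=\frac{B^2}{N^2}\cdot Np(1-p)=\frac{p(1-p)B^2}{N},
\]
which is the claimed expression; it is strictly positive for $0<p<1$, so \textsc{FiberPool} strictly improves the per-block reward variance.

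I expect the main obstacle to be justifying the modelling step in the PPLNS case, namely that the producers of the last $N$ shares form an i.i.d.\ Bernoulli$(p)$ sequence. This is exactly where the hypothesis that ``shares accurately reflect each miner's hashrate'' and the independent-increments structure of mining are invoked; I would remark that any second-order correction is of the same negligible order as the coefficient of variation already discarded in the setup, so the stated variance holds exactly within the model. It is also worth stating explicitly that the \textsc{FiberPool} value $0$ refers to the variance of the per-block payout only: the randomness in how many blocks the pool finds in a period is shared by both schemes and is precisely the pool-level variability that was deliberately set aside at the start of this subsection.
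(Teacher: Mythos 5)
Your proposal matches the paper's proof essentially verbatim: the \textsc{FiberPool} half follows from Theorem~\ref{thmfairness} giving a constant per-block payout of $B\alpha/(\alpha+\beta)$ (hence variance $0$), and the PPLNS half models $m_1$'s share count among the last $N$ shares as $\mathrm{Binomial}(N,p)$ and scales by $B/N$ to get $p(1-p)B^2/N$. Your additional remarks on the i.i.d.\ Bernoulli modelling assumption and on conditioning per block are reasonable elaborations of what the paper leaves implicit, but the argument is the same.
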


\begin{proof}
From Theorem~\ref{thmfairness}, miner $m_1$'s reward per block in \textsc{FiberPool} is fixed at $B\alpha/(\alpha + \beta)$, implying that the variance of $m_1$'s reward is 0. 

In contrast, under PPLNS, consider the shares among which the block reward is distributed at the time of block generation. The number of these shares belonging to $m_1$ follows a binomial distribution with success probability $p = \alpha/(\alpha + \beta)$. The variance of a binomial distribution with parameters $N$ (number of trials) and $p$ is $Np(1-p)$. Given that each share corresponds to a reward of $B/N$, the variance of $m_1$'s reward under PPLNS becomes
\begin{align}
Np(1-p)\left(\frac{B}{N}\right)^2 = p(1-p)\frac{B^2}{N}.
\end{align}
Here, $B/N$ denotes the reward allocated per share. 
\end{proof}

Additionally, in \textsc{FiberPool}, each miner can freely set the share generation difficulty. This means that by appropriately setting the difficulty, a miner can generate shares that accurately reflect their contribution through mining. Moreover, once a share is generated, it is guaranteed to be subject to reward distribution after undergoing the verification process within \textsc{FiberPool}. In other words, at the time of participation in \textsc{FiberPool}, a miner is assured of receiving rewards proportional to the ratio of its mining contribution. 

On the other hand, in PPLNS, a miner must generate shares at a difficulty level predetermined by the system or protocol to be eligible for rewards. The above discussion, together with Theorem~\ref{variance}, demonstrates that \textsc{FiberPool} offers higher reward stability compared to mining pools adopting PPLNS like \textsc{P2Pool}.

\subsection{Incentive Compatibility} \label{incentiveCompatibility}
Incentive compatibility ensures that each miner's most profitable strategy is to mine honestly. In this study, we assess the incentive compatibility of \textsc{FiberPool} by examining three potential mining strategies: pool hopping, cross-period mining strategies, and delaying the submission of shares or blocks.

\subsubsection{Pool Hopping}
\textsc{FiberPool} is resistant to pool hopping in the following sense.

\begin{thm}\label{poolhopping} 
Fix the total block reward for all miners at $R$ per period. Consider period $i$. Suppose that miner $m_1$ participates in \textsc{FiberPool} for $N-2$ consecutive periods starting from period $i$, and then exits \textsc{FiberPool} for the final two periods to mine independently. In this scenario, compared to not participating in \textsc{FiberPool}, $m_1$ suffers a loss of $\frac{2R\alpha^2}{\alpha + \beta}$. 
\end{thm}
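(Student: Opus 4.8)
The plan is to compare miner $m_1$'s total expected revenue over the $N$-period window $\{i, i+1, \dots, i+N-1\}$ under two behaviours — the hopping schedule of the statement (join \textsc{FiberPool} at period $i$, mine inside it through period $i+N-3$, then mine independently in periods $i+N-2$ and $i+N-1$) versus mining independently throughout — and then take the difference. Since $m_1$ was mining independently before period $i$ in both cases, the pre-window state is identical, so this difference is exactly the claimed loss. The single structural fact I will lean on is the two-period delay recalled in Section~III: the block reward \textsc{FiberPool} produces in period $j$ is distributed according to the committed hashrate distribution of period $j-2$. This delay is what creates a ``ramp-up'' at the front of the window and a shrunken ``ramp-down'' tail at the back; the bulk of the window is a wash, and the loss is precisely the net of the two boundary effects.

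The independent-mining baseline is immediate: with total per-period system reward $R$ and hashrate fraction $\alpha$, miner $m_1$ earns $\alpha R$ in each of the $N$ periods, for a total of $N\alpha R$.

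For the hopping schedule I would split the window into three blocks. First, periods $i$ and $i+1$: $m_1$ now contributes hashrate inside \textsc{FiberPool}, but the rewards the pool pays out in these periods are tied to the distributions of periods $i-2$ and $i-1$, when $m_1$ was not yet a participant, so $m_1$ collects nothing — a ramp-up cost of $2\alpha R$ relative to independent mining. Second, periods $i+2, \dots, i+N-3$: here $m_1$ both mines in \textsc{FiberPool} and was already a participant two periods earlier, so Theorem~\ref{thmfairness} gives $m_1$ the fraction $\alpha/(\alpha+\beta)$ of the pool's per-period reward; the pool holds hashrate $\alpha+\beta$ and hence earns $(\alpha+\beta)R$ per period, so $m_1$ gets $\alpha R$ per period, exactly the independent rate. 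Third, periods $i+N-2$ and $i+N-1$: $m_1$ has exited and mines independently for $\alpha R$ per period, but it still collects the delayed payout for its in-pool work of periods $i+N-4$ and $i+N-3$ (it remains rational to submit that share-verification data even after leaving); now, however, the pool consists of $m_2$ alone with hashrate $\beta$, so the pool earns only $\beta R$ per period and $m_1$'s delayed payout shrinks to $\tfrac{\alpha\beta}{\alpha+\beta}R$ per period. Adding the three blocks, $m_1$'s total under the hopping schedule is $(N-2)\alpha R + \tfrac{2\alpha\beta}{\alpha+\beta}R$ — which is where I need $N\ge 4$, so that the three ranges are disjoint and periods $i+N-4, i+N-3$ indeed lie inside the participation window.

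Taking the difference then yields the loss
\[
N\alpha R - \Big[(N-2)\alpha R + \tfrac{2\alpha\beta}{\alpha+\beta}R\Big] \;=\; 2\alpha R - \tfrac{2\alpha\beta}{\alpha+\beta}R \;=\; \frac{2R\alpha^2}{\alpha+\beta},
\]
as claimed. The part that deserves care rather than being routine is the boundary bookkeeping of the delay: one must check that the only periods in which $m_1$'s share of the pool's output departs from the steady-state value $\alpha R$ are exactly the first two (value $0$) and the last two (value $\tfrac{\alpha\beta}{\alpha+\beta}R$), that no reward for in-window \textsc{FiberPool} work is paid outside the window — which holds because $m_1$ performs no in-pool work in periods $i+N-2, i+N-1$ — and that the diminished tail payout is actually realized, since $m_2$ honestly commits to the period $i+N-4, i+N-3$ distributions (which include $m_1$) while producing blocks in periods $i+N-2, i+N-1$.
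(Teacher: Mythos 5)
Your proposal is correct and follows essentially the same route as the paper's proof: the identical three-block decomposition (two zero-reward ramp-up periods, $N-4$ steady-state periods at $\alpha R$, and two tail periods at $\alpha R + \tfrac{\alpha\beta}{\alpha+\beta}R$), compared against the $N\alpha R$ independent baseline. Your additional remarks on the $N\ge 4$ requirement and on $m_1$ still submitting its verification data after exiting are sensible boundary checks that the paper leaves implicit.
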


\begin{proof}
During the first 2 periods starting from period $i$, $m_1$ receives no rewards. This is because all block rewards from blocks generated by $m_1$ are distributed to miners who participated in \textsc{FiberPool} before period $i-1$. In the subsequent $N-4$ periods, $m_1$ earns $R\alpha$ per period. This is because the total reward of \textsc{FiberPool} is $R(\alpha + \beta)$, and $m_1$ receives a fraction $\alpha / (\alpha + \beta)$. In the final two periods, the reward per period becomes
\begin{align}
R\Bigg(\alpha + \frac{\alpha\beta}{\alpha + \beta}\Bigg).
\end{align}
First, the block reward from blocks generated by $m_1$ is $R\alpha$ per period. Additionally, from the block rewards $R\beta$ generated by \textsc{FiberPool}, $m_1$ receives
\begin{align}
R\beta \cdot \frac{\alpha}{\alpha + \beta}
\end{align} 
per period.

If $m_1$ does not participate in \textsc{FiberPool}, the total reward would be $NR\alpha$. Therefore, by participating in \textsc{FiberPool}, the loss compared to not participating is
\begin{align}
    &\,NR\alpha - \Big(2 \cdot 0 + (N-4) \cdot R\alpha + 2 \cdot R\Big(\alpha + \frac{\alpha\beta}{\alpha + \beta}\Big)\Big) \\
    &= \frac{2R\alpha^2}{\alpha + \beta}.
\end{align}
\end{proof}

Theorem~\ref{poolhopping} demonstrates that if a miner engages in pool hopping with \textsc{FiberPool}, they incur a loss of $\frac{2R\alpha^2}{\alpha + \beta}$ each time they repeatedly enter and exit the pool. Consequently, miners have no incentive to adopt pool hopping strategies when participating in \textsc{FiberPool}.

As a potential drawback, one might worry that the incentive to participate in \textsc{FiberPool} is minimal. However, we believe this concern is minor. Mining in a PoW blockchain is a zero-sum game: one miner's loss is another's gain. The loss incurred by participating in \textsc{FiberPool} is offset by the benefits of continued participation in \textsc{FiberPool} for honest miners.

\subsubsection{Cross-period Strategic Mining}
In \textsc{FiberPool}, cross-period mining is possible. Until the $Prepare$ phase of period $i+1$ begins, it is possible to submit share verification data from period $i-1$ to the storage chain. This capability allows for the generation of valid shares for period $i-1$ during period $i$. We will now show that incorporating such cross-period mining into an otherwise honest strategy to earn higher profits is difficult.

Since the interval between periods is determined by the number of blocks in the main chain, we assume that the total block rewards of \textsc{FiberPool} per period remain constant regardless of the period. In addition, we assume that each miner's PoW per period is also constant.

Under these assumptions, the following theorem shows that cross-period mining does not yield higher profits than honest mining alone.

\begin{thm}\label{intermining}
Incorporating cross-period mining into an honest mining strategy is not more profitable than honest mining alone.
\end{thm}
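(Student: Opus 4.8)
\emph{Approach.} The plan is to recast any strategy of the form ``honest mining augmented with cross-period mining'' as a reallocation of $m_1$'s fixed per-period hash power among the consecutive reward pots, and to show such a reallocation can only lose money, because the reward $m_1$ extracts from a pot is a \emph{concave} function of the amount of PoW it diverts into that pot. Write $w$ for $m_1$'s PoW per period, $W\ge w$ for the pool's total PoW per period, and $R$ for the pool's total block reward per period; all three are constant by hypothesis, and by Theorem~\ref{thmfairness} (together with the budget-balance remark following it) the reward $R$ tied to period-$j$ mining is split among participants in proportion to their verified period-$j$ PoW, so honest $m_1$ collects $Rw/W$ from every pot.

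\emph{Bookkeeping.} A cross-period deviation during wall-clock period $t$ is exactly the act of re-labelling some amount $d_t\in[0,w]$ of $m_1$'s PoW as period-$(t-1)$ shares rather than period-$t$ shares; this is the only admissible move, since the window for submitting period-$(t-1)$ data closes at the $Prepare$ phase of period $t+1$ (nothing earlier is reachable) and the period-$(t+1)$ distribution is unknown during period $t$ (nothing later is reachable). Setting $x_j:=d_{j+1}-d_j$ for the net PoW pushed onto pot $j$, conservation of hash power gives $\sum_j x_j=0$ (finitely supported, telescoping); $m_1$'s verified period-$j$ PoW becomes $w+x_j$, and --- all other miners being honest --- the pool's period-$j$ total becomes $W+x_j$. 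One must also check how back-shifted work that happens to solve a full block (not a sub-threshold share) is treated: in the protocol the pot a block reward enters is fixed by the block's main-chain height rather than by its template, so pot sizes remain $R$; in the variant where a mislabelled block reward cannot be linked at all, the deviator forfeits it, which only strengthens the conclusion. Hence the deviation payoff is $\sum_j R\,(w+x_j)/(W+x_j)$, whose difference from the honest payoff $\sum_j Rw/W$ is a finite sum over the indices with $x_j\neq 0$.

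\emph{Concavity and conclusion.} Put $h(x)=(w+x)/(W+x)=1-(W-w)/(W+x)$; then $h''(x)=-2(W-w)/(W+x)^{3}\le 0$ on $x>-W$, so $h$ is concave. Since $\sum_j x_j=0$, Jensen's inequality --- equivalently, the fact that $(x_j)\mapsto\sum_j R\,h(x_j)$ is concave and stationary at $x\equiv 0$ on the hyperplane $\sum_j x_j=0$, hence maximized there --- gives $\sum_j h(x_j)\le\sum_j h(0)$, i.e. the deviation earns no more than honest mining, with equality only in the trivial cases $x\equiv 0$ or $w=W$. To exhibit the loss concretely, isolate a single atomic move $d_i=\delta>0$ (so $x_{i-1}=\delta$, $x_i=-\delta$, all other $x_j=0$):
\begin{align}
R\big(h(\delta)+h(-\delta)-2h(0)\big) &= -\frac{2R\,\delta^{2}(W-w)}{W(W-\delta)(W+\delta)} \le 0.
\end{align}
A general schedule is a superposition of such atomic moves.

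\emph{Main obstacle.} The concavity estimate is routine; the load-bearing part is the bookkeeping --- pinning down precisely how a full block mined against a shifted template is handled by the smart contract, hence whether pot sizes are literally invariant, perturbed only by terms that cancel in $m_1$'s payoff, or accompanied by an outright forfeiture --- and verifying that an arbitrary finitely supported schedule $\{d_t\}$ is genuinely dominated by $x\equiv 0$ despite the pots being coupled through the shared denominators $W+x_j$. The multivariate concavity argument above is what disposes of this last point cleanly.
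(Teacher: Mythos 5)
Your proof is correct and reduces the theorem to essentially the same optimization problem as the paper: $m_1$ allocates a fixed total of PoW across $N$ period-pots and collects from pot $j$ the amount $R\,y_j/(C+y_j)$, where $C$ is the other participants' constant PoW, and the claim is that the uniform allocation is optimal. Where you differ is in how the optimum is certified. The paper writes down the Lagrangian, observes that stationarity forces all $x_i$ to coincide, and stops; strictly speaking that only exhibits a critical point, and one still needs the concavity of $x\mapsto Rx/(C+x)$ (your $h''\le 0$) to rule out a saddle or a boundary maximum. Your Jensen argument supplies exactly that ingredient, so your route is the more rigorous of the two. You are also more careful on two modelling points the paper passes over: you first pin down which back-shifts the protocol actually admits before enlarging to arbitrary reallocations (the paper simply grants arbitrary reallocation as a strengthening of the adversary), and you check that a block mined against a shifted template leaves the pot sizes $R$ unperturbed. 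Two small remarks: the paper's denominator $P(1-\alpha)+x_i$ counts all non-$m_1$ hashrate in the system rather than only the pool-internal hashrate that your $W-w$ denotes --- the two agree only when $\gamma=0$, but the functional form and hence the argument are identical either way; and your closing sentence that a general schedule is a superposition of atomic moves would not on its own prove the general case (the losses are not additive), though it is harmless because the Jensen inequality already covers arbitrary finitely supported schedules.
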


\begin{proof}
Fix the total \textsc{FiberPool} reward per period at $R$. In fact, the more $m_1$ engages in cross-period mining, the smaller the \textsc{FiberPool} block reward becomes. Thus, this assumption favors cross-period mining.

Let $P$ be the total amount of PoW in the entire blockchain system per period. Although cross-period mining in \textsc{FiberPool} is executed in period $i$ for period $i-1$, we extend this scenario: assume that $m_1$ can freely allocate its PoW among each period over $N$ consecutive periods.

Maximizing $m_1$'s reward can then be reduced to the following optimization problem:
\begin{align}
\text{maximize} \quad & \sum\limits_{i=1}^N \frac{R x_i}{P(1-\alpha) + x_i}, \\
\text{subject to} \quad 
& \sum\limits_{i=1}^N x_i = NP\alpha, \label{lc1} \\
& x_i \geq 0 \quad \text{for all } i.
\end{align}
We solve this problem using the method of Lagrange multipliers. Considering the constraint (\ref{lc1}), define
\begin{align}
L(\mathbf{x}) = \sum\limits_{i=1}^N \frac{R x_i}{P(1-\alpha) + x_i} + \lambda \Big(NP\alpha - \sum\limits_{i=1}^N x_i\Big).
\end{align}
Next, take the partial derivative of $L$ with respect to each $x_i$:
\begin{align}
\frac{\partial L}{\partial x_i} = \frac{R \cdot P(1-\alpha)}{[P(1-\alpha) + x_i]^2} - \lambda.
\end{align}
Setting $\frac{\partial L}{\partial x_i} = 0$ yields
\begin{align}
\frac{R \cdot P(1-\alpha)}{[P(1-\alpha) + x_i]^2} = \lambda.
\end{align}
This implies that $x_i$ is the same for all $i$. From constraint (\ref{lc1}), we have
\begin{align}
x_i = P\alpha \quad \text{for all } i.
\end{align}
This result indicates that the optimal strategy of incorporating cross-period mining into honest mining is actually honest mining without cross-period mining.
\end{proof}

\textbf{Assumption of Constant Total Block Rewards per Period.}
When engaging in strategic mining that incorporates cross-period mining, an important consideration for miner $m_1$ is the variation of total block rewards in each period. In theory, by concentrating mining efforts on periods with higher rewards, one might earn more. However, in \textsc{FiberPool}, adjusting strategies based on differing block rewards is challenging. This difficulty arises because the block rewards actually distributed for mining activities in period $i$ originate from blocks in period $i+2$. As a result, shares for period $i$ must be submitted before the \textit{Prepare} phase of period $i+2$ begins.

\subsubsection{Mining Strategy of Delaying Submission of Shares or Blocks}
In a typical mining pool, shares and blocks are submitted immediately upon generation. However, by delaying these submissions, a miner might increase their rewards under certain payment schemes. For example, in a pool using a proportional payment scheme, delaying block submission until after generating additional shares could increase a miner’s reward.

Such timing-based strategies do not offer an advantage in \textsc{FiberPool}. In this system, share submission occurs before the \textit{Prepare} phase, and the actual reward distribution for these shares takes place in period $i+2$. Therefore, the timing of share submission does not affect a miner’s proportion of PoW or their rewards.

The same reasoning applies to blocks. In period $i$, regardless of when a miner publishes a block after its generation, the timing does not alter the total PoW contribution attributed to each miner in that period. Consequently, delaying block or share submissions does not increase rewards in \textsc{FiberPool}.

\section{Discussion}
\subsection{Related Work} Mining pools aim to stabilize mining revenue and are a common feature in most blockchain systems. The issues with mining pools fall into two main categories. The first category concerns the centralization of the entire system due to the concentration of mining power in a few centralized pools. The second category involves new attack risks that arise with the introduction of mining pools.

A well-known attack associated with mining pools is the Block Withholding Attack~\cite{rosenfeld2011analysisbitcoinpooledmining}\cite{OnSubversiveMinerStrategiesandBlockWithholdingAttack}\cite{miningdilemma}\cite{BitcoinBlockWithholdingAttackAnalysisandMitigation}\cite{FAW}. While Bag et al. proposed a method to prevent Block Withholding Attacks by modifying the PoW algorithm~\cite{BitcoinBlockWithholdingAttackAnalysisandMitigation}, a solution that remains compatible with existing systems has yet to be proposed.

Among the issues with mining pools, one promising approach to mitigate centralization is the development of decentralized mining pools. An early example is \textsc{P2Pool}\cite{p2pool}, which is managed by a share chain. There are also proposals that extend the payment scheme of \textsc{P2Pool} in a more general way\cite{Blockchain-basedDecentralizedRewardSharing}. However, \textsc{P2Pool} faces two key challenges: its security depends on the computational power of the share chain, and it is subject to the inherent scalability issues of blockchains. Solutions to the blockchain scalability problem include Prism\cite{Prism} and Phantom\cite{phantomghostdag}, and there have been attempts to apply these to \textsc{P2Pool}\cite{braidpool}. However, these attempts ultimately run into issues with the network resources and transaction processing capacity dilemma. 

To address these challenges, a decentralized mining pool managed by a smart contract, called \textsc{SmartPool}\cite{SmartPool}, was proposed. By leveraging smart contracts, \textsc{SmartPool} aligns the system’s security with that of the main chain. Moreover, it introduces probabilistic share verification, which significantly reduces the required network resources, data volume, and computation for share verification, thereby enabling stable reward distribution. However, \textsc{SmartPool} encounters issues such as fees incurred from using smart contracts on the main chain and a payment scheme that is not budget-balanced.

Other decentralized mining pool proposals include PoolParty\cite{PoolParty} and the approach by Papathanasiou et al.\cite{pooldecentralized}. PoolParty distributes rewards using payment channels such as the Lightning Network\cite{lightningnetwork}. A notable drawback of PoolParty is that it lacks detailed explanations on how these payment channels are utilized for reward distribution, leaving the method unclear. Additionally, PoolParty requires miners to provide a deposit equivalent to the block reward as a safeguard against fraud. This requirement substantially limits miner participation and undermines system decentralization.

Papathanasiou et al.'s decentralized mining pool reproduces the mechanisms of existing pools using smart contracts in a decentralized manner. However, this approach faces two significant issues: scalability problems similar to those of \textsc{P2Pool}, and fee challenges caused by share verification on smart contracts.

Another approach to preventing overall system centralization by mining pools is to change the PoW algorithm\cite{NonoutsourceableScratch-Off}\cite{disincentivizeLargeBitcoinMiningPools}. By altering the PoW algorithm, the incentive to form mining pools is reduced. However, this approach is not compatible with existing systems and presents greater implementation challenges compared to the development of decentralized mining pools.

\subsection{storage chain} \label{sectionstoragechain}
\textbf{Candidates for the storage chain.} In \textsc{FiberPool}, a storage chain is required to share verification data for shares. The primary requirement for the storage chain is to achieve consensus in a decentralized manner. Potential candidates include blockchains specialized in data storage, such as Filecoin\cite{filecoin} or Arweave\cite{arweave}, as well as a share chain similar to the one used in \textsc{P2Pool}. Using a blockchain specialized in data storage is straightforward and provides robust safety, especially in the early stages when insufficient miners are involved in \textsc{FiberPool}. On the other hand, a share chain could enable the sharing of share verification data at a lower cost. However, designing effective incentives for sharing verification data presents a challenge when employing a share chain as the storage chain. Additionally, the low security in the early stages of a share chain poses further difficulties.

\textbf{Reducing the cost of sharing share verification data.} In \textsc{FiberPool}, fees are incurred when sharing share verification data through the storage chain. One approach to reduce these fees is to aggregate verification data from multiple miners into a single transaction and then submit the aggregated verification data to the storage chain. This requires a single entity responsible for aggregation and submission. However, the presence of such an entity does not compromise the decentralization of \textsc{FiberPool}. This is because alternative submission methods that bypass censorship by this entity remain available. Even in the event of censorship, there is ample time to detect and respond, ensuring the submission process can proceed without hindrance.

\subsection{Impact of \textsc{FiberPool} from the View Point of Mining Fairness}
As established by Theorem~\ref{thmfairness}, \textsc{FiberPool} guarantees strong mining fairness. One clear advantage is the reduced risk of Selfish Mining and similar attacks. From a fairness standpoint, Selfish Mining can be seen as an attack that disrupts fair mining. Owing to \textsc{FiberPool}’s strong mining fairness, all miners receive rewards proportional to their hashrate, irrespective of an adversary’s strategic choices.

On the other hand, a disadvantage of strong mining fairness is that it increases the incentive for double-spending attacks. Ordinarily, a miner attempting a double-spend risks losing the associated block rewards if their blocks are not ultimately included on the main chain. Under \textsc{FiberPool}’s strong mining fairness, these risks are diminished.

One approach to mitigate the above impacts is to weight the PoW of shares based on their distance from the main chain. For example, one could invalidate shares whose parent blocks are not included in the main chain. Such weighting has been adopted in networks prone to frequent blockchain forks, such as Ethereum and Monero’s \textsc{P2Pool}\cite{monero_p2pool}\cite{ethereum}.

\section{Conclusion}
We proposed a new decentralized mining pool, \textsc{FiberPool}, designed to overcome several challenges encountered by existing decentralized mining pools: scalability, early-stage security, share verification costs, and budget imbalance. \textsc{FiberPool} operates through three interconnected components—a smart contract on the main chain, a storage chain, and a child chain. Additionally, we showed the mining fairness, budget balance, reward stability, and incentive compatibility in \textsc{FiberPool}.

\bibliography{hoge} 

\begin{thebibliography}{10}
\providecommand{\url}[1]{#1}
\csname url@samestyle\endcsname
\providecommand{\newblock}{\relax}
\providecommand{\bibinfo}[2]{#2}
\providecommand{\BIBentrySTDinterwordspacing}{\spaceskip=0pt\relax}
\providecommand{\BIBentryALTinterwordstretchfactor}{4}
\providecommand{\BIBentryALTinterwordspacing}{\spaceskip=\fontdimen2\font plus
\BIBentryALTinterwordstretchfactor\fontdimen3\font minus \fontdimen4\font\relax}
\providecommand{\BIBforeignlanguage}[2]{{%
\expandafter\ifx\csname l@#1\endcsname\relax
\typeout{** WARNING: IEEEtran.bst: No hyphenation pattern has been}%
\typeout{** loaded for the language `#1'. Using the pattern for}%
\typeout{** the default language instead.}%
\else
\language=\csname l@#1\endcsname
\fi
#2}}
\providecommand{\BIBdecl}{\relax}
\BIBdecl

\bibitem{bitcoin}
S.~Nakamoto, ``Bitcoin: A peer-to-peer electronic cash system,'' 2008.

\bibitem{IsBitcoinaDecentralizedCurrency}
A.~Gervais, G.~O. Karame, V.~Capkun, and S.~Capkun, ``Is bitcoin a decentralized currency?'' \emph{IEEE Security and Privacy}, vol.~12, no.~3, pp. 54--60, 2014.

\bibitem{majorityisnotenough}
I.~Eyal and E.~G. Sirer, ``Majority is not enough: Bitcoin mining is vulnerable,'' 2013.

\bibitem{p2pool}
``P2pool - decentralized bitcoin mining pool,'' \url{http://p2pool.in/}, 2024, accessed: 2024-04-27.

\bibitem{SmartPool}
\BIBentryALTinterwordspacing
L.~Luu, Y.~Velner, J.~Teutsch, and P.~Saxena, ``{SmartPool}: Practical decentralized pooled mining,'' in \emph{26th USENIX Security Symposium (USENIX Security 17)}.\hskip 1em plus 0.5em minus 0.4em\relax Vancouver, BC: USENIX Association, Aug. 2017, pp. 1409--1426. [Online]. Available: \url{https://www.usenix.org/conference/usenixsecurity17/technical-sessions/presentation/luu}
\BIBentrySTDinterwordspacing

\bibitem{merkletree}
R.~C. Merkle, ``A digital signature based on a conventional encryption function,'' in \emph{Advances in Cryptology --- CRYPTO '87}, C.~Pomerance, Ed.\hskip 1em plus 0.5em minus 0.4em\relax Berlin, Heidelberg: Springer Berlin Heidelberg, 1988, pp. 369--378.

\bibitem{BlockchainScalingUsingRollupsAComprehensiveSurvey}
L.~T. Thibault, T.~Sarry, and A.~S. Hafid, ``Blockchain scaling using rollups: A comprehensive survey,'' \emph{IEEE Access}, vol.~10, pp. 93\,039--93\,054, 2022.

\bibitem{plasma}
\BIBentryALTinterwordspacing
J.~Poon and V.~Buterin, ``{Plasma: Scalable Autonomous Smart Contracts},'' 2017, wORKING DRAFT. [Online]. Available: \url{https://plasma.io/plasma.pdf}
\BIBentrySTDinterwordspacing

\bibitem{FAW}
\BIBentryALTinterwordspacing
Y.~Kwon, D.~Kim, Y.~Son, E.~Vasserman, and Y.~Kim, ``Be selfish and avoid dilemmas,'' in \emph{Proceedings of the 2017 {ACM} {SIGSAC} Conference on Computer and Communications Security}.\hskip 1em plus 0.5em minus 0.4em\relax {ACM}, oct 2017. [Online]. Available: \url{https://doi.org/10.1145%2F3133956.3134019}
\BIBentrySTDinterwordspacing

\bibitem{rosenfeld2011analysisbitcoinpooledmining}
\BIBentryALTinterwordspacing
M.~Rosenfeld, ``Analysis of bitcoin pooled mining reward systems,'' 2011. [Online]. Available: \url{https://arxiv.org/abs/1112.4980}
\BIBentrySTDinterwordspacing

\bibitem{OnScalingDecentralizedBlockchains}
K.~Croman, C.~Decker, I.~Eyal, A.~E. Gencer, A.~Juels, A.~Kosba, A.~Miller, P.~Saxena, E.~Shi, E.~G{\"u}n~Sirer, D.~Song, and R.~Wattenhofer, ``On scaling decentralized blockchains,'' in \emph{Financial Cryptography and Data Security}, J.~Clark, S.~Meiklejohn, P.~Y. Ryan, D.~Wallach, M.~Brenner, and K.~Rohloff, Eds.\hskip 1em plus 0.5em minus 0.4em\relax Berlin, Heidelberg: Springer Berlin Heidelberg, 2016, pp. 106--125.

\bibitem{Prism}
\BIBentryALTinterwordspacing
V.~Bagaria, S.~Kannan, D.~Tse, G.~Fanti, and P.~Viswanath, ``Prism: Deconstructing the blockchain to approach physical limits,'' in \emph{Proceedings of the 2019 ACM SIGSAC Conference on Computer and Communications Security}, ser. CCS '19.\hskip 1em plus 0.5em minus 0.4em\relax New York, NY, USA: Association for Computing Machinery, 2019, pp. 585--602. [Online]. Available: \url{https://doi.org/10.1145/3319535.3363213}
\BIBentrySTDinterwordspacing

\bibitem{phantomghostdag}
\BIBentryALTinterwordspacing
Y.~Sompolinsky, S.~Wyborski, and A.~Zohar, ``Phantom ghostdag: a scalable generalization of nakamoto consensus: September 2, 2021,'' in \emph{Proceedings of the 3rd ACM Conference on Advances in Financial Technologies}, ser. AFT '21.\hskip 1em plus 0.5em minus 0.4em\relax New York, NY, USA: Association for Computing Machinery, 2021, pp. 57--70. [Online]. Available: \url{https://doi.org/10.1145/3479722.3480990}
\BIBentrySTDinterwordspacing

\bibitem{randao}
\BIBentryALTinterwordspacing
``Randao: Verifiable random number generation,'' September 2017, accessed: 2024-11-03. [Online]. Available: \url{https://www.randao.org/whitepaper/Randao_v0.85_en.pdf}
\BIBentrySTDinterwordspacing

\bibitem{lightningnetwork}
\BIBentryALTinterwordspacing
J.~Poon and T.~Dryja, ``The bitcoin lightning network: Scalable off-chain instant payments,'' 2016, dRAFT Version 0.5.9.2. [Online]. Available: \url{https://lightning.network/lightning-network-paper.pdf}
\BIBentrySTDinterwordspacing

\bibitem{Arbitrum}
\BIBentryALTinterwordspacing
H.~Kalodner, S.~Goldfeder, X.~Chen, S.~M. Weinberg, and E.~W. Felten, ``Arbitrum: Scalable, private smart contracts,'' in \emph{27th USENIX Security Symposium (USENIX Security 18)}.\hskip 1em plus 0.5em minus 0.4em\relax Baltimore, MD: USENIX Association, Aug. 2018, pp. 1353--1370. [Online]. Available: \url{https://www.usenix.org/conference/usenixsecurity18/presentation/kalodner}
\BIBentrySTDinterwordspacing

\bibitem{ShardingProtocol}
\BIBentryALTinterwordspacing
L.~Luu, V.~Narayanan, C.~Zheng, K.~Baweja, S.~Gilbert, and P.~Saxena, ``A secure sharding protocol for open blockchains,'' in \emph{Proceedings of the 2016 ACM SIGSAC Conference on Computer and Communications Security}, ser. CCS '16.\hskip 1em plus 0.5em minus 0.4em\relax New York, NY, USA: Association for Computing Machinery, 2016, pp. 17--30. [Online]. Available: \url{https://doi.org/10.1145/2976749.2978389}
\BIBentrySTDinterwordspacing

\bibitem{cuckooCycle}
J.~Tromp, ``Cuckoo cycle: A memory bound graph-theoretic proof-of-work,'' in \emph{Financial Cryptography and Data Security}, 2015, pp. 49--62, accessed: 2024-11-03.

\bibitem{ProgPoW}
G.~Colvin, A.~Lanfranchi, M.~Carter, and IfDefElse, ``Eip-1057: Progpow, a programmatic proof-of-work,'' \url{https://github.com/ethereum/EIPs/blob/master/EIPS/eip-1057.md}, 2019, accessed: 2024-11-03.

\bibitem{etchash}
L.~Williams, ``Ecip-1099: Calibrate epoch duration,'' \url{https://github.com/ethereumclassic/ECIPs/blob/master/_specs/ecip-1099.md}, 2020, accessed: 2024-11-03.

\bibitem{randomx}
\BIBentryALTinterwordspacing
tevador, ``Randomx,'' 2019, accessed: 2024-11-03. [Online]. Available: \url{https://github.com/tevador/RandomX}
\BIBentrySTDinterwordspacing

\bibitem{scriptbitcoincash}
\BIBentryALTinterwordspacing
C.~Percival, ``Stronger key derivation via sequential memory-hard functions,'' 2009, accessed: 2024-11-03. [Online]. Available: \url{https://www.tarsnap.com/scrypt/scrypt.pdf}
\BIBentrySTDinterwordspacing

\bibitem{OnSubversiveMinerStrategiesandBlockWithholdingAttack}
\BIBentryALTinterwordspacing
N.~T. Courtois and L.~Bahack, ``On subversive miner strategies and block withholding attack in bitcoin digital currency,'' 2014. [Online]. Available: \url{https://arxiv.org/abs/1402.1718}
\BIBentrySTDinterwordspacing

\bibitem{miningdilemma}
I.~Eyal, ``The miner's dilemma,'' in \emph{2015 IEEE Symposium on Security and Privacy}, 2015, pp. 89--103.

\bibitem{BitcoinBlockWithholdingAttackAnalysisandMitigation}
S.~Bag, S.~Ruj, and K.~Sakurai, ``Bitcoin block withholding attack: Analysis and mitigation,'' \emph{IEEE Transactions on Information Forensics and Security}, vol.~12, no.~8, pp. 1967--1978, 2017.

\bibitem{Blockchain-basedDecentralizedRewardSharing}
\BIBentryALTinterwordspacing
J.~Gudmundsson and J.~L. Hougaard, ``Blockchain-based decentralized reward sharing: The case of mining pools,'' \emph{ACM Trans. Econ. Comput.}, vol.~12, no.~1, Mar. 2024. [Online]. Available: \url{https://doi.org/10.1145/3641120}
\BIBentrySTDinterwordspacing

\bibitem{braidpool}
``Braidpool,'' \url{https://github.com/braidpool/braidpool}, accessed: 2024-11-03.

\bibitem{PoolParty}
\BIBentryALTinterwordspacing
N.~Dana~Troutman and A.~Laszka, ``Poolparty: Efficient blockchain-agnostic decentralized mining pool,'' in \emph{Proceedings of the 2021 3rd International Conference on Blockchain Technology}, ser. ICBCT '21.\hskip 1em plus 0.5em minus 0.4em\relax New York, NY, USA: Association for Computing Machinery, 2021, pp. 20--27. [Online]. Available: \url{https://doi.org/10.1145/3460537.3460554}
\BIBentrySTDinterwordspacing

\bibitem{pooldecentralized}
A.~Papathanasiou, C.~N. Kyriakidou, I.~Pittaras, and G.~Polyzos, ``Smart contract-based decentralized mining pools for proof-of- work blockchains,'' in \emph{2024 IEEE International Conference on Blockchain (Blockchain)}, 2024, pp. 227--234.

\bibitem{NonoutsourceableScratch-Off}
\BIBentryALTinterwordspacing
A.~Miller, A.~Kosba, J.~Katz, and E.~Shi, ``Nonoutsourceable scratch-off puzzles to discourage bitcoin mining coalitions,'' in \emph{Proceedings of the 22nd ACM SIGSAC Conference on Computer and Communications Security}, ser. CCS '15.\hskip 1em plus 0.5em minus 0.4em\relax New York, NY, USA: Association for Computing Machinery, 2015, pp. 680--691. [Online]. Available: \url{https://doi.org/10.1145/2810103.2813621}
\BIBentrySTDinterwordspacing

\bibitem{disincentivizeLargeBitcoinMiningPools}
\BIBentryALTinterwordspacing
I.~Eyal and E.~G. Sirer, ``How to disincentivize large bitcoin mining pools,'' 2014, accessed: 2024-11-07. [Online]. Available: \url{https://hackingdistributed.com/2014/06/18/how-to-disincentivize-large-bitcoin-mining-pools/}
\BIBentrySTDinterwordspacing

\bibitem{filecoin}
\BIBentryALTinterwordspacing
{Protocol Labs}, ``Filecoin: A decentralized storage network,'' July 2017, accessed: 2024-11-03. [Online]. Available: \url{https://filecoin.io/filecoin.pdf}
\BIBentrySTDinterwordspacing

\bibitem{arweave}
\BIBentryALTinterwordspacing
S.~Williams, A.~Kedia, L.~Berman, and S.~Campos-Groth, ``Arweave: The permanent information storage protocol,'' December 2023, dRAFT 17. [Online]. Available: \url{https://www.arweave.org/files/arweave-lightpaper.pdf}
\BIBentrySTDinterwordspacing

\bibitem{monero_p2pool}
``Monero p2pool,'' \url{https://github.com/SChernykh/p2pool}, accessed: 2024-11-03.

\bibitem{ethereum}
``Ethereum: A secure decentralised generalised transaction ledger,'' \url{https://ethereum.github.io/yellowpaper/paper.pdf}.

\end{thebibliography}
\bibliographystyle{IEEEtran.bst}

\end{document}